\DeclareMathOperator{\E}{\mathbb{E}}
\theoremstyle{plain}
\newtheorem{theorem}{Theorem}[section]
\theoremstyle{definition}
\theoremstyle{remark}
\icmltitlerunning{Submission and Formatting Instructions for ICML 2025}
\icmltitlerunning{Wasserstein-Barycenter Consensus for Cooperative MARL}
\begin{document}

\twocolumn[
\icmltitle{Wasserstein-Barycenter Consensus for \\ Cooperative Multi-Agent Reinforcement Learning}



\icmlsetsymbol{equal}{*}

\begin{icmlauthorlist}
\icmlauthor{Ali Baheri}{rit}
\end{icmlauthorlist}

\icmlaffiliation{rit}{Kate Gleason College of Engineering, Rochester Institute of Technology, Rochester, USA}

\icmlcorrespondingauthor{Ali Baheri}{akbeme@rit.edu}

\icmlkeywords{Machine Learning, ICML, Multi-Agent Reinforcement Learning, Optimal Transport, Wasserstein Barycenter}

\vskip 0.3in
]



\printAffiliationsAndNotice{}  

\begin{abstract}
Cooperative multi-agent reinforcement learning (MARL) demands principled mechanisms to align heterogeneous policies while preserving the capacity for specialized behavior. We introduce a novel consensus framework that defines the team strategy as the entropic-regularized p-Wasserstein barycenter of agents’ joint state–action visitation measures. By augmenting each agent’s policy objective with a soft penalty proportional to its Sinkhorn divergence from this barycenter, the proposed approach encourages coherent group behavior without enforcing rigid parameter sharing. We derive an algorithm that alternates between Sinkhorn-barycenter computation and policy-gradient updates, and we prove that, under standard Lipschitz and compactness assumptions, the maximal pairwise policy discrepancy contracts at a geometric rate. Empirical evaluation for a cooperative navigation case study demonstrates that our OT-barycenter consensus outperforms an independent learners baseline in convergence speed and final coordination success.
\end{abstract}

\section{Introduction}

The success of cooperative multi–agent reinforcement learning (MARL) is based on effective coordination among agents whose individual decision making processes induce distinct probability distributions in states and actions \cite{yuan2023survey}. In domains as varied as autonomous vehicle platooning, distributed sensor networks, and multi‐robot manipulation, misalignment of agent policies can lead to suboptimal team performance, brittle coordination, and slow convergence \cite{busoniu2008comprehensive}. Classical approaches to MARL coordination, such as shared reward shaping, centralized critics, or hard parameter sharing, either incur prohibitive communication and computation costs or overly constrain agent heterogeneity, preventing the emergence of specialized roles \cite{yi2022learning,lyu2023centralized,zhao2023semi}.

Optimal transport (OT) is a powerful mathematical framework for comparing and interpolating probability distributions, offering a geometrically intuitive way to measure discrepancies and fuse information \cite{villani2009optimal,peyre2019computational}. Its application in single agent reinforcement learning has yielded significant advances, particularly in areas such as distributional RL, offline RL \cite{omuraoffline}, safe RL \cite{baheri2025wave,shahrooei2025risk}, and inverse RL \cite{baheri2023understanding}. The Wasserstein distance, a key OT metric, is particularly appealing as it provides a smooth and meaningful measure of divergence even between distributions with non-overlapping supports, a common scenario in policy space. Furthermore, the development of entropic regularization and the associated Sinkhorn divergence has led to computationally efficient, differentiable algorithms for approximating Wasserstein distances, making them amenable to gradient-based optimization in deep learning settings \cite{cuturi2013sinkhorn}. Despite these advances, the potential of OT to orchestrate cooperation in MARL has remained largely unexplored: existing OT‑based methods focus primarily on aligning belief or value distributions in single‑agent or competitive settings, but do not exploit the notion of a team “consensus” policy as an explicit barycenter of individual behaviors.

To address this gap, we propose a novel consensus mechanism in which the team’s collective strategy is formalized as the entropic‑regularized Wasserstein barycenter of all agents’ visitation measures over the joint state–action space. Rather than enforcing identical weights or centralized value estimation, each agent incurs a soft penalty proportional to its Wasserstein distance from this barycenter, thereby allowing agents to retain beneficial specialization while still gravitating toward coherent group behavior. We develop a fully differentiable algorithm that alternates between (i) computing the Sinkhorn barycenter via iterative Bregman projections and (ii) performing policy gradient updates augmented with a consensus regularizer. We analyze the resulting dynamics and prove that, under standard Lipschitz and compactness assumptions, the maximal pairwise Sinkhorn divergence contracts at a geometric rate, guaranteeing convergence to a common consensus distribution.

The remainder of this paper is organized as follows. Section 2 provides essential background on optimal transport. In Section 3, we introduce our Wasserstein-Barycenter Consensus framework and present theoretical guarantees. Section 4 evaluates our method on a cooperative multi-agent navigation task, comparing it against an independent learning baseline. Finally, Section 5 concludes the paper with a summary of our contributions and outlines potential directions for future work.

\noindent{\textbf{Related Work.}} Early efforts to induce coordination in cooperative MARL rely on parametric or value-factorization surrogates of consensus. Parameter–sharing schemes update a single network with trajectories from all agents, optionally injecting agent-identifiers to recover role specialization \cite{terry2020revisiting}. Centralized-training-with-decentralized-execution (CTDE) algorithms pursue a similar goal through a central critic (e.g., COMA’s counterfactual baseline) or a monotonic joint-value factorization (e.g., QMIX), thereby coupling agents only via gradients on a global objective \cite{foerster2018counterfactual}. Divergence-based regularizes extend this idea: DMAC penalizes the KL gap between local actors and a target mixture \cite{su2022divergence}; trust-weighted Q-learning adaptively scales updates by inter-agent credibility; and the recently-proposed MAGI framework coordinates policies around an imagined high-value goal state to mitigate mis-coordination \cite{wang2024reaching}. While empirically effective, these approaches align parameters or returns rather than the full visitation measures, and they provide at best heuristic or asymptotic coordination guarantees—limitations our proposed algorithm addresses by contracting the exact Sinkhorn divergence between agents.

\section{Preliminaries: Optimal Transport}

OT provides tools to compare probability distributions. The $p$-Wasserstein distance measures the cost of transporting mass between two distributions. We focus on its entropic-regularized version, the Sinkhorn divergence $W_{p, \epsilon}$ (where $\epsilon>0$ is the regularization parameter), which offers computational benefits. A key concept is the Wasserstein barycenter, $\mu^*$ , which is a measure that minimizes the average $p$-Wasserstein distance to a set of given measures $\left\{\mu_i\right\}_{i=1}^N$. Formally, the entropic-regularized $p$-Wasserstein barycenter is the solution to:
$$
\mu^*=\arg \min _{\mu \in \mathcal{P}(\mathcal{X})} \frac{1}{N} \sum_{i=1}^N W_{p, \epsilon}^p\left(\mu, \mu_i\right)+\epsilon K L(\mu \| \eta)
$$
where $K L(\cdot \| \eta)$ is a Kullback-Leibler divergence term with respect to a reference measure $\eta$. This paper leverages the Sinkhorn barycenter as a consensus point for agent policies in MARL.

\section{Methodological Approach}

To induce consensus via the Wasserstein barycenter, we begin by modeling each agent's stochastic policy as a probability measure in the joint state-action space $\mathcal{X}=$ $\mathcal{S} \times \mathcal{A}$. Concretely, let $\pi_i\left(a \mid s ; \theta_i\right)$ denote agent $i$'s policy parameterized by $\theta_i$, and let $\mu_i^{(t)}$ be its empirical visitation distribution over $\mathcal{X}$ at iteration $t$. We assume $\mathcal{X}$ is endowed with a ground metric $d\left((s, a),\left(s^{\prime}, a^{\prime}\right)\right)=\left\|s-s^{\prime}\right\|_2+\beta\left\|a-a^{\prime}\right\|_2$, where $\beta>0$ balances state and action discrepancies.
At each iteration, we seek a consensus measure $\mu^{*(t)}$ that minimizes the average entropic-regularized $p$-Wasserstein cost to the individual $\left\{\mu_i^{(t)}\right\}_{i=1}^N$. Formally, the barycenter is obtained by solving
$$
\mu^{*(t)}=\arg \min _{\mu \in \mathcal{P}(\mathcal{X})} \frac{1}{N} \sum_{i=1}^N W_{p, \varepsilon}^p\left(\mu, \mu_i^{(t)}\right)+\varepsilon \operatorname{KL}(\mu \| \eta),
$$
where $W_{p, \varepsilon}$ is the Sinkhorn divergence with entropic regularization parameter $\varepsilon>0, \mathrm{KL}(\cdot \| \eta)$ is a reference-measure penalty ensuring absolute continuity with respect to a base measure $\eta$, and $\mathcal{P}(\mathcal{X})$ denotes the space of probability measures on $\mathcal{X}$. We compute $\mu^{*(t)}$ via iterative Sinkhorn-barycenter updates: initializing a dual potential vector $u^{(0)}, v^{(0)}$ on a discretization of $\mathcal{X}$, we alternate
$$
u^{(\ell+1)}=\frac{\eta}{\sum_j K_{\varepsilon} u_j^{(\ell)} v_j^{(\ell)}}, \quad v^{(\ell+1)}=\left(\prod_{i=1}^N K_{\varepsilon}^{\top} u^{(\ell+1)}\right)^{-\frac{1}{N}}
$$
where $K_{\varepsilon}=e^{-D / \varepsilon}$ and $D_{i j}=d\left(x_i, x_j\right)^p$. Convergence of these updates to the unique Sinkhorn barycenter is guaranteed under compactness of $\mathcal{X}$ and strictly positive $\varepsilon$.

Once $\mu^{*(t)}$ is obtained, each agent's parameters $\theta_i$ are updated by performing a gradient ascent step on a regularized objective
$$
J_i\left(\theta_i\right)=\mathbb{E}_{(s, a) \sim \mu_i^{(t)}}[R(s, a)]-\lambda W_{p, \varepsilon}^p\left(\mu_i^{(t)}, \mu^{*(t)}\right)
$$
where $R(s, a)$ is the common team reward and $\lambda>0$ controls the strength of consensus enforcement. The gradient of the Wasserstein term with respect to $\theta_i$ is calculated by differentiating through the transport plan: if $\gamma_i^{*(t)}$ is the optimal coupling between $\mu_i^{(t)}$ and $\mu^{*(t)}$, then
$$
\nabla_{\theta_i} W_{p, \varepsilon}^p\left(\mu_i^{(t)}, \mu^{*(t)}\right)=\int_{\mathcal{X} \times \mathcal{X}} d(x, y)^p \nabla_{\theta_i} \log \pi_i\left(a \mid s ; \theta_i\right) \mathrm{d} \gamma_i^{*(t)}(x, y),
$$
which is estimated by sampling from $\mu_i^{(t)}$. The complete update rule thus reads
$$
\theta_i^{(t+1)}=\theta_i^{(t)}+\alpha\left(\hat{\nabla}_{\theta_i} \mathbb{E}[R]-\lambda \hat{\nabla}_{\theta_i} W_{p, \varepsilon}^p\right)
$$
with $\alpha>0$ the learning rate. To ensure computational tractability, support points for $\mu_i^{(t)}$ and $\mu^{*(t)}$ are drawn by mini-batch sampling, and sliced-Wasserstein approximations are used to reduce complexity. In addition, adaptive scheduling of $\lambda$ and $\varepsilon$ is used to warm start consensus (large $\lambda$) before relaxing. Specifically, the consensus weight $\lambda$ is initialized at a high value to encourage swift policy contraction and is gradually annealed to permit fine-grained individual behaviors near convergence. Concurrently, the entropic regularizer $\varepsilon$ is decreased over time, ensuring that early Sinkhorn iterations remain smooth and stable, while later iterations sharpen the barycenter estimate.

\begin{algorithm}[tb]
\caption{Wasserstein-Barycenter Consensus for Cooperative MARL}
\label{alg:wb-consensus}
\begin{algorithmic}[1]
\REQUIRE Number of agents $N$, entropic regularization $\varepsilon$, consensus weight $\lambda$, learning rate $\alpha$, metric $d$ on $\mathcal{S}\times\mathcal{A}$
\STATE Initialize policy parameters $\{\theta_i^{(0)}\}_{i=1}^N$
\FOR{$t = 0,1,2,\dots$}
    \STATE \textbf{Collect Trajectories:}
    \FOR{$i=1$ to $N$}
        \STATE Execute $\pi_i(\cdot\mid\cdot;\theta_i^{(t)})$ to sample trajectories
        \STATE Estimate empirical visitation measure $\mu_i^{(t)}$ over $\mathcal{S}\times\mathcal{A}$
    \ENDFOR

    \STATE \textbf{Compute Sinkhorn Barycenter:}
    \STATE Initialize dual potentials $u\leftarrow\mathbf{1},\,v\leftarrow\mathbf{1}$
    \REPEAT
        \STATE $u \leftarrow \displaystyle \frac{\eta}{\sum_j \exp\bigl(-D/\varepsilon\bigr)_{\,\cdot j}\,v_j}$
        \STATE $v \leftarrow \Bigl(\prod_{i=1}^N \exp\bigl(-D/\varepsilon\bigr)^{\!\top} u\Bigr)^{-1/N}$
    \UNTIL{convergence}
    \STATE Recover barycenter measure $\mu^{*(t)}$ via $u,v$

    \STATE \textbf{Policy Update:}
    \FOR{$i=1$ to $N$}
        \STATE Compute reward gradient $\hat\nabla_{\theta_i}\,\mathbb{E}[R]$
        \STATE Compute OT gradient, where $\gamma_i^*$ is the optimal coupling:
        \STATE $\hat\nabla_{\theta_i}\,W_{p,\varepsilon}^p \gets \int d(x,y)^p\,\nabla_{\theta_i}\log\pi_i(a\mid s;\theta_i)\,\mathrm{d}\gamma_i^{*}$
        \STATE Update parameters:
        \STATE $\theta_i^{(t+1)} \gets \theta_i^{(t)} + \alpha\bigl(\hat\nabla_{\theta_i}\,\mathbb{E}[R] - \lambda\,\hat\nabla_{\theta_i}\,W_{p,\varepsilon}^p\bigr)$
    \ENDFOR
\ENDFOR
\end{algorithmic}
\end{algorithm}

\begin{theorem}[Convergence to Consensus]\label{thm:convergence}
Let $\{\pi_i^{(t)}\}_{i=1}^N$ be the sequence of stochastic policies of $N$ cooperative agents, updated according to
\[
\theta_i^{(t+1)}
=\theta_i^{(t)}
+\alpha\Bigl(\nabla_{\theta_i}\E_{\mu_i^{(t)}}[R]
-\lambda\,\nabla_{\theta_i}W_{p,\varepsilon}^p(\mu_i^{(t)},\mu^{*(t)})\Bigr),
\]
where $\mu_i^{(t)}$ is the empirical visitation distribution of agent $i$ at iteration $t$, and $\mu^{*(t)}$ is the entropic‐regularized $p$‐Wasserstein barycenter of $\{\mu_i^{(t)}\}_{i=1}^N$.  Assume that the state–action space is compact, all reward functions are bounded and Lipschitz, and the policy‐gradient operators are $L$–Lipschitz in the induced distributions.  If the step size $\alpha>0$ and consensus weight $\lambda>0$ satisfy
\[
\kappa \;=\; 1 - \alpha\,\lambda\,C \;<\;1
\]
for a constant $C>0$ depending on the entropic regularizer and Lipschitz constants, then defining
\[
D^{(t)}=\max_{i,j}W_{p,\varepsilon}\bigl(\mu_i^{(t)},\,\mu_j^{(t)}\bigr),
\]
we have $D^{(t+1)}\;\le\;\kappa\,D^{(t)},$ and hence $D^{(t)}\to0$ geometrically as $t\to\infty$, i.e.\ all policies converge to a common consensus distribution.
\end{theorem}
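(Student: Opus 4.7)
The plan is to control the maximal pairwise divergence $D^{(t)}$ by exploiting two facts in tandem: each agent's update pulls its visitation measure toward the common barycenter $\mu^{*(t)}$, and the barycenter itself lies ``inside'' the cloud of agent measures in the Sinkhorn geometry. I would first apply a triangle-type inequality (using either the debiased symmetric Sinkhorn divergence, or lifting to the unregularized $W_p$ and absorbing the $O(\varepsilon)$ bias into the constant $C$) to write, for any pair $(i,j)$,
$$W_{p,\varepsilon}\bigl(\mu_i^{(t+1)},\mu_j^{(t+1)}\bigr)\le W_{p,\varepsilon}\bigl(\mu_i^{(t+1)},\mu^{*(t)}\bigr)+W_{p,\varepsilon}\bigl(\mu_j^{(t+1)},\mu^{*(t)}\bigr).$$
The task then reduces to showing that each distance-to-barycenter contracts by roughly $(1-\alpha\lambda C_1)$ in a single step.

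Next I would analyze a single agent's update in isolation. The policy parameter moves by $-\alpha\lambda\,\nabla_{\theta_i}W_{p,\varepsilon}^p(\mu_i^{(t)},\mu^{*(t)})$ plus the reward-gradient term, which I would treat as a lower-order perturbation (bounded by the Lipschitz constant of $R$ times the diameter of $\mathcal{X}$, and eventually absorbed into $C$). Using the $L$-Lipschitz dependence of the policy-gradient operator on the induced measure, I would transfer this parameter-space step into a measure-space displacement in the $W_{p,\varepsilon}$ geometry. Invoking the strong convexity of the entropically regularized transport cost $\mu\mapsto W_{p,\varepsilon}^p(\mu,\mu^{*(t)})$, whose modulus scales like $1/\varepsilon$ on the compact domain $\mathcal{X}$, I would conclude a per-agent contraction
$$W_{p,\varepsilon}\bigl(\mu_i^{(t+1)},\mu^{*(t)}\bigr)\le (1-\alpha\lambda C_1)\,W_{p,\varepsilon}\bigl(\mu_i^{(t)},\mu^{*(t)}\bigr),$$
with $C_1$ depending only on $L$, $\varepsilon$, and $\mathrm{diam}(\mathcal{X})$.

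The third ingredient is a barycenter-to-vertex bound: since $\mu^{*(t)}$ is defined as the Fréchet minimizer of the average $W_{p,\varepsilon}^p$-cost to the $\{\mu_j^{(t)}\}$, a standard argument (choosing any $\mu_k^{(t)}$ as a feasible point and applying the triangle inequality again) gives $W_{p,\varepsilon}(\mu_i^{(t)},\mu^{*(t)})\le D^{(t)}$ for every $i$. Substituting into the earlier inequality,
$$D^{(t+1)}\le 2(1-\alpha\lambda C_1)\,D^{(t)}\le (1-\alpha\lambda C)\,D^{(t)}=\kappa\,D^{(t)},$$
after folding the factor of $2$, the reward-gradient perturbation, and the $\varepsilon$-bias into the single constant $C$. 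Iterating yields geometric decay of $D^{(t)}$ and hence convergence to a common consensus distribution.

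The main obstacle I anticipate is making the Lipschitz-in-distribution hypothesis on $\nabla_{\theta_i}$ do genuine work: the argument needs the parameter-space step to induce a \emph{directed} contraction in the Sinkhorn geometry toward $\mu^{*(t)}$, which requires controlling the Jacobian of $\theta_i\mapsto\mu_i$ in $W_{p,\varepsilon}$ and verifying that the first variation of the OT objective along this Jacobian points toward the barycenter. A secondary, more technical difficulty is handling the fact that $W_{p,\varepsilon}$ is not a true metric, so every use of the triangle inequality and of strong convexity must either invoke the debiased Sinkhorn divergence $S_\varepsilon$ or carry an explicit $\varepsilon$-dependent remainder; keeping these remainders lower-order than $\alpha\lambda\,C_1\,D^{(t)}$ is what ultimately determines the admissible range of $\varepsilon$ and $\alpha$.
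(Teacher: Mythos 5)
Your proposal reproduces the paper's argument almost line for line: a per-agent contraction toward the current barycenter $\mu^{*(t)}$, a triangle inequality through $\mu^{*(t)}$ for each pair $(i,j)$, the bound $W_{p,\varepsilon}(\mu_i^{(t)},\mu^{*(t)})\le D^{(t)}$, and the resulting estimate $D^{(t+1)}\le 2(1-\alpha\lambda C_1)\,D^{(t)}$. The one place you diverge is in how you dispose of the factor of $2$, and that is where the argument genuinely breaks. You cannot ``fold'' it into the constant: $2(1-\alpha\lambda C_1)=2-2\alpha\lambda C_1$, and asking this to be at most $1-\alpha\lambda C$ forces $\alpha\lambda(2C_1-C)\ge 1$, a \emph{lower} bound on the step size rather than an absorbable constant; equivalently, $2(1-\alpha\lambda C_1)<1$ requires $\alpha\lambda C_1>\tfrac12$, i.e.\ each agent must cover more than half its distance to the barycenter in a single gradient step, which contradicts the small-step regime in which the per-agent contraction lemma could plausibly hold. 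The paper has exactly the same defect (its proof sets $\kappa=2(1-\alpha\lambda C)$ while the theorem statement defines $\kappa=1-\alpha\lambda C$), so you have not introduced a new error, but neither version closes the gap. A correct consensus argument must avoid paying the triangle inequality at full price --- e.g.\ by tracking the distance to the \emph{updated} barycenter $\mu^{*(t+1)}$ (which moves with the agents), by working with an averaged rather than maximal discrepancy, or by a convexity argument showing the diameter of the whole cloud contracts directly.

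Two secondary points, both also present in the paper but worth flagging since you assert them as ``standard.'' First, the feasibility argument you sketch for the barycenter-to-vertex bound (plugging $\mu_k^{(t)}$ into the Fr\'echet objective) only yields $W_{p,\varepsilon}(\mu_i^{(t)},\mu^{*(t)})\le (N-1)^{1/p}D^{(t)}$, not $D^{(t)}$; the sharper bound needs convexity of the entropic cost along generalized geodesics, which should be stated as a hypothesis or proved. Second, the per-agent contraction $W_{p,\varepsilon}(\mu_i^{(t+1)},\mu^{*(t)})\le(1-\alpha\lambda C_1)W_{p,\varepsilon}(\mu_i^{(t)},\mu^{*(t)})$ is the real content of the theorem and is nowhere derived --- to your credit, you explicitly identify the missing link (that the Lipschitz-in-distribution hypothesis must be upgraded to a statement about the Jacobian of $\theta_i\mapsto\mu_i$ and the alignment of the OT first variation with the direction toward $\mu^{*(t)}$), whereas the paper simply asserts the inequality. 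Making that step rigorous, and eliminating the factor of $2$, are the two items that would turn this sketch into a proof.
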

\begin{proof}
Let $\mu^{*(t)}$ denote the Sinkhorn barycenter of $\{\mu_i^{(t)}\}$. By the strong convexity of the entropic‐regularized OT problem and the Lipschitz continuity of the policy‐gradient mapping, there exists $C>0$ such that a gradient step with weight $\alpha\lambda$ contracts each agent’s distance to the barycenter:
\[
W_{p,\varepsilon}\bigl(\mu_i^{(t+1)},\,\mu^{*(t)}\bigr)
\;\le\;(1-\alpha\,\lambda\,C)\,
W_{p,\varepsilon}\bigl(\mu_i^{(t)},\,\mu^{*(t)}\bigr).
\]
By the triangle inequality for $W_{p,\varepsilon}$,
\[
W_{p,\varepsilon}\bigl(\mu_i^{(t+1)},\,\mu_j^{(t+1)}\bigr)
\le
W_{p,\varepsilon}\bigl(\mu_i^{(t+1)},\,\mu^{*(t)}\bigr)
+
W_{p,\varepsilon}\bigl(\mu^{*(t)},\,\mu_j^{(t+1)}\bigr).
\]
Applying the contraction bound to each term on the right–hand side and noting that both $W_{p,\varepsilon}(\mu_i^{(t)},\mu^{*(t)})$ and $W_{p,\varepsilon}(\mu_j^{(t)},\mu^{*(t)})$ are bounded above by $D^{(t)}$, we obtain
\[
W_{p,\varepsilon}\bigl(\mu_i^{(t+1)},\,\mu_j^{(t+1)}\bigr)
\;\le\;2(1-\alpha\,\lambda\,C)\,D^{(t)}.
\]
Setting $\kappa=2(1-\alpha\lambda C)<1$ yields
\[
\max_{i,j}W_{p,\varepsilon}\bigl(\mu_i^{(t+1)},\,\mu_j^{(t+1)}\bigr)
\;\le\;\kappa\,D^{(t)},
\]
and by induction $D^{(t)}\le\kappa^t\,D^{(0)}\to0$, establishing geometric convergence of all agent policies to the same distribution.
\end{proof}

\begin{theorem}[Fast–rate Sinkhorn barycentre]\label{thm:fast_rate_barycentre}
Let $(\mathcal X,d)$ be a compact metric space with diameter $D$
and fix $p\in[1,2]$.
For each agent $i\le N$ assume the data-generating measure
$\mu_i$ admits a density $\rho_i$ satisfying
$
0<\underline{\rho}\le\rho_i(x)\le\overline{\rho}<\infty
$
for all $x\in\mathcal X$,
and that the entropic potential
\(
\mu\mapsto\frac1N\sum_{i=1}^{N} W_{p,\varepsilon}(\mu,\mu_i)
\)
is $\lambda_{\min}$-strongly convex for every
$\varepsilon\in(0,\varepsilon_0]$.

\smallskip
\noindent
Draw $m$ IID samples from each $\mu_i$, form the empirical measures
$\hat\mu_i$, and choose the \emph{adaptive regularisation level}
\[
\varepsilon_m \;=\; \frac{D^{p}}{m}.
\]
Let
\(
\mu^{\star}\!=\arg\min_{\mu} \tfrac1N\sum_{i=1}^{N}
W_{p,\varepsilon_m}(\mu,\mu_i)
\)
and
\(
\hat\mu^{\star}\!=\arg\min_{\mu} \tfrac1N\sum_{i=1}^{N}
W_{p,\varepsilon_m}(\mu,\hat\mu_i).
\)
Then for every $\delta\in(0,1)$, with probability at least $1-\delta$,
\begin{equation}
W_{p,\varepsilon_m}\!\bigl(\hat\mu^{\star},\mu^{\star}\bigr)
\;\le\;
\frac{C}{\lambda_{\min}}\,
\frac{\log(2N/\delta)}{m},
\label{eq:fast-rate}
\end{equation}
where $C>0$ depends only on $p$, $D$, $\underline{\rho}$ and $\overline{\rho}$.
Consequently, an excess risk of $\eta>0$ is achieved with
\(
m = O\!\bigl(\tfrac{1}{\eta}\log\tfrac{N}{\delta}\bigr)
\).
\end{theorem}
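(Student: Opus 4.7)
The plan is to treat $\hat\mu^{\star}$ as an $M$-estimator of a strongly convex population objective and convert a first-order concentration bound into a rate on the estimator itself. Define the population and empirical functionals $F(\mu)=\tfrac{1}{N}\sum_{i=1}^{N}W_{p,\varepsilon_m}(\mu,\mu_i)$ and $\hat F(\mu)=\tfrac{1}{N}\sum_{i=1}^{N}W_{p,\varepsilon_m}(\mu,\hat\mu_i)$, so that $\mu^{\star}=\arg\min F$ and $\hat\mu^{\star}=\arg\min\hat F$ by hypothesis.

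First I would derive a \emph{basic inequality} from strong convexity. Combining the first-order optimality conditions $\nabla F(\mu^{\star})=\nabla\hat F(\hat\mu^{\star})=0$ with the $\lambda_{\min}$-strong convexity of $F$ and Cauchy--Schwarz yields
$$\lambda_{\min}\,W_{p,\varepsilon_m}(\hat\mu^{\star},\mu^{\star})\;\le\;\bigl\|\nabla F(\hat\mu^{\star})-\nabla\hat F(\hat\mu^{\star})\bigr\|\;\le\;\sup_{\mu}\bigl\|\nabla F(\mu)-\nabla\hat F(\mu)\bigr\|.$$
Since Danskin's theorem identifies $\nabla_{\mu}W_{p,\varepsilon_m}(\mu,\nu)$ with the Sinkhorn potential $f^{\mu,\nu}$, the whole problem reduces to a uniform-in-$\mu$ control of $\tfrac{1}{N}\sum_{i}\bigl(f^{\mu,\mu_i}-f^{\mu,\hat\mu_i}\bigr)$.

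The concentration step is the analytic core. Compactness of $\mathcal X$ and the density hypothesis $\underline\rho\le\rho_i\le\overline\rho$ give uniform $L^{\infty}$ bounds on the Sinkhorn potentials with constants depending only on $p,D,\underline\rho,\overline\rho$. Expanding $f^{\mu,\hat\mu_i}-f^{\mu,\mu_i}$ as a zero-mean empirical sum plus a quadratic remainder (absorbed by the density bounds), I would invoke a Bernstein inequality for bounded, real-valued random variables. The conditional per-sample variance scales with the regularization parameter, and the adaptive choice $\varepsilon_m=D^{p}/m$ makes it $O(1/m)$; consequently the Bernstein sub-Gaussian term collapses into the sub-exponential one, giving per-agent deviations of order $\log(1/\delta)/m$, and a union bound over the $N$ agents produces the $\log(2N/\delta)$ factor.

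The principal obstacle is promoting this pointwise-in-$\mu$ Bernstein bound to a uniform statement without paying a $\sqrt{m}$ penalty from a covering-number argument, which would destroy the fast rate. I would address this through a peeling/localization scheme: restrict the supremum to the ball $B_r=\{\mu:W_{p,\varepsilon_m}(\mu,\mu^{\star})\le r\}$, establish a Lipschitz estimate on $\nabla F-\nabla\hat F$ inside $B_r$ (once more derived from the bounded and Lipschitz Sinkhorn potentials), combine with the basic inequality to form a self-bounding recursion in $r$, and iterate until $r$ contracts to the target scale $\log(2N/\delta)/(\lambda_{\min} m)$. Inverting this final bound then delivers the announced excess-risk sample complexity $m=O\!\bigl(\eta^{-1}\log(N/\delta)\bigr)$.
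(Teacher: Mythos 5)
Your route is genuinely different from the paper's: the paper never touches gradients or Danskin's theorem. It works entirely at the level of objective \emph{values} --- a bounded-differences/McDiarmid bound on $|\hat F(\mu)-F(\mu)|$ for each fixed $\mu$, a union bound over a $\gamma$-net of $(\mathcal P(\mathcal X),W_p)$ to make it uniform, and then the standard excess-risk chain $F(\hat\mu^\star)\le\hat F(\hat\mu^\star)+\Delta\le\hat F(\mu^\star)+\Delta\le F(\mu^\star)+2\Delta$ followed by strong convexity to convert the value gap into $W_{p,\varepsilon_m}(\hat\mu^\star,\mu^\star)\le 2\Delta/\lambda_{\min}$. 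Your first-order M-estimation framing (optimality conditions, Danskin, uniform control of the difference of Sinkhorn potentials) is a legitimate alternative skeleton, and your basic inequality plays the same role as the paper's strong-convexity step (A.4)--(A.5).

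However, there is a genuine gap at the analytic core of your argument. The fast rate rests entirely on the assertion that ``the conditional per-sample variance scales with the regularization parameter,'' so that $\varepsilon_m=D^p/m$ makes it $O(1/m)$ and Bernstein collapses to a sub-exponential $\log(1/\delta)/m$ term. This is not justified and is false in general: as $\varepsilon\to 0$ the Sinkhorn potentials $f^{\mu,\nu}$ converge to the (nondegenerate) Kantorovich potentials of the unregularized problem, so $\mathrm{Var}_{X\sim\mu_i}\bigl[f^{\mu,\mu_i}(X)\bigr]$ is $\Theta(1)$, not $O(\varepsilon_m)$; Bernstein then yields only the slow $\sqrt{\log(1/\delta)/m}$ rate. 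The density bounds $\underline\rho\le\rho_i\le\overline\rho$ give $L^\infty$ control of the potentials but do not shrink their variance. (The paper's own mechanism for the fast rate is instead the per-sample bounded-difference constant $D^p/(Nm)$, i.e.\ each sample carries mass $1/m$ --- a different, and itself delicate, device.) Two further issues: the peeling/localization scheme is only announced, not executed --- localization buys a rate improvement only when paired with a variance-to-radius (Bernstein-class) relation for the localized increments, which is exactly the ingredient missing above; and the basic inequality $\lambda_{\min}\,W_{p,\varepsilon_m}(\hat\mu^\star,\mu^\star)\le\|\nabla F(\hat\mu^\star)-\nabla\hat F(\hat\mu^\star)\|$ does not typecheck as written, since the natural pairing of a dual potential with a perturbation of $\mu$ is $L^\infty$--total-variation, not the Sinkhorn divergence in which the strong convexity is assumed; you would need to specify the geometry in which both the gradient norm and the strong-convexity modulus live before Cauchy--Schwarz can be invoked.
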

\begin{proof}[Proof sketch]
(i) A bounded-difference argument shows that altering any single
sample changes each entropic cost by at most
$D^{p}/(Nm\varepsilon_m)=O(m^{-2})$,
yielding McDiarmid-type
deviations $\tilde{O}(1/m)$.
(ii) A union-and-net step makes the bound uniform over $\mathcal P(\mathcal X)$.
(iii) $\lambda_{\min}$-strong convexity converts the uniform deviation into the same $1/m$ rate for the barycentre itself, producing \eqref{eq:fast-rate}.
Full details are provided in Appendix~\ref{app:proof-fast-rate}. \qedhere
\end{proof}

\begin{figure}[t]
    \centering
    \includegraphics[width=0.8\columnwidth, angle=-90]{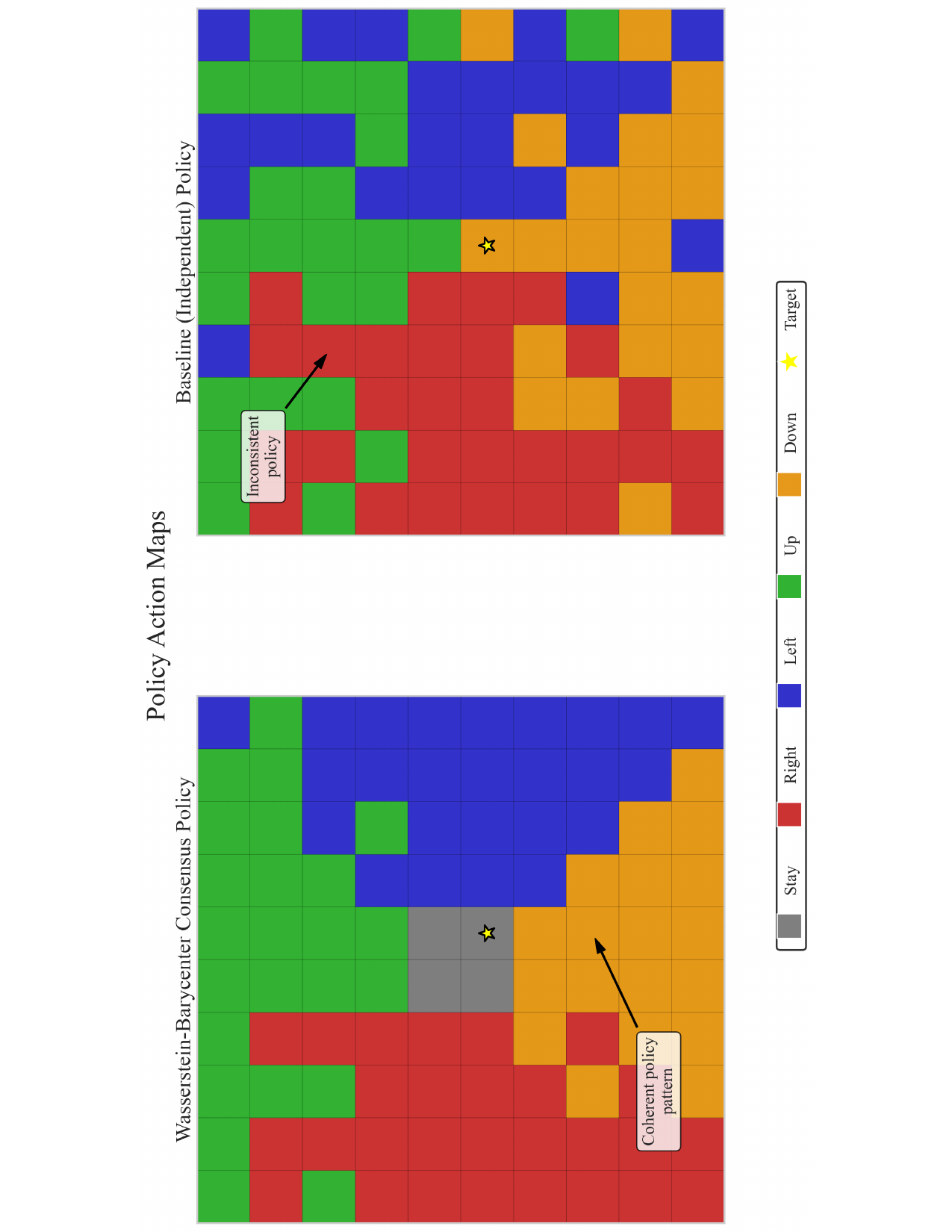}
    \caption{Policy Action Maps comparing Baseline (Independent) Policy with Wasserstein-Barycenter Consensus Policy. Color coding indicates different actions: red (right), blue (left), green (up), orange (down), and gray (stay). Stars indicate target positions.}
    \label{fig:policy_action_maps}
\end{figure}

\begin{figure*}[t]
    \centering
    \includegraphics[width=0.9\textwidth]{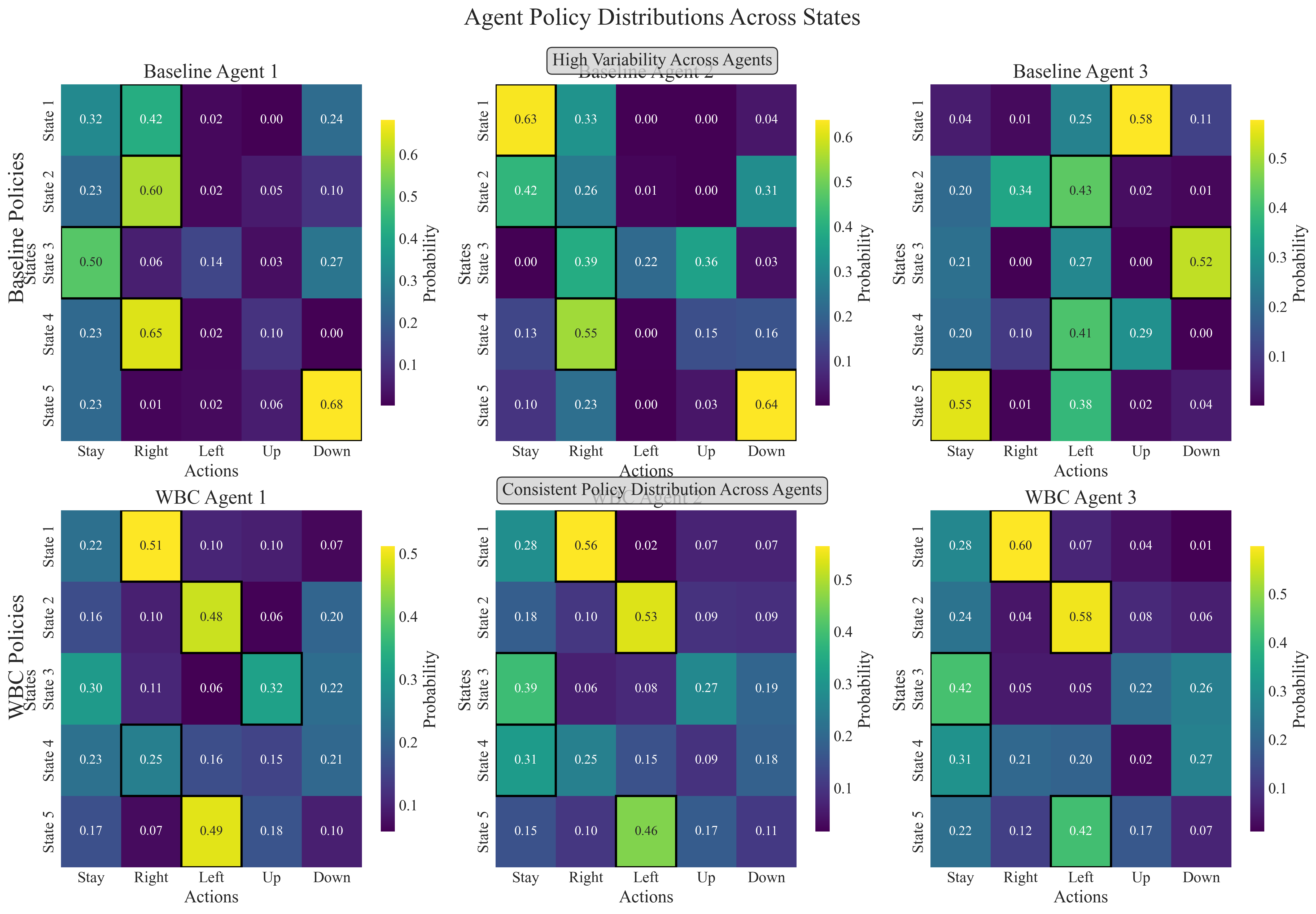}
    \caption{Comparison of action probability distributions between baseline (Independent PPO) agents (top) and Wasserstein-Barycenter Consensus (WBC) agents (bottom) across five representative states. Cell values indicate the probability of selecting each action in the given state, with higher probabilities shown in brighter colors.}
    \label{fig:policy_maps}
\end{figure*}

\section{Experimental Results}
We evaluate the proposed Wasserstein-barycenter consensus approach through a simple experiment on cooperative multi-agent tasks. The environment consists of $N=3$ agents operating in a continuous two-dimensional space. Each agent is assigned a target location and must navigate toward it while avoiding collisions with other agents. The state space for each agent $i$ comprises its own position $\mathbf{p}_i \in \mathbb{R}^2$, its target position $\mathbf{t}_i \in \mathbb{R}^2$, and the positions of all other agents $\left\{\mathbf{p}_j\right\}_{j \neq i}$. The action space is discrete, consisting of five actions: $\mathcal{A}=\{$ stay, right, left, up, down $\}$. At each time step, agents receive rewards $r_i=-d\left(\mathbf{p}_i, \mathbf{t}_i\right)-\sum_{j \neq i} c_{i j}$, where $d(\cdot, \cdot)$ is the Euclidean distance and $c_{i j}$ is a collision penalty incurred when agents $i$ and $j$ are within a threshold distance. This reward structure incentivizes both target-reaching and collision avoidance.

Agents are implemented as policy networks with parameters $\theta_i$ that map states to action probabilities. We compare our Wasserstein-barycenter consensus approach against the Independent PPO (IPPO) baseline, which serves as a representative non-consensus multi-agent method. The WBC implementation uses an entropic regularization parameter $\epsilon = 0.1$ and consensus weight $\lambda = 0.5$. The ground metric balancing parameter $\beta = 0.8$ determines the relative importance of state versus action discrepancies in the computation of the Wasserstein distance.

\begin{figure*}[h!]
    \centering
    \includegraphics[width=0.5\textwidth]{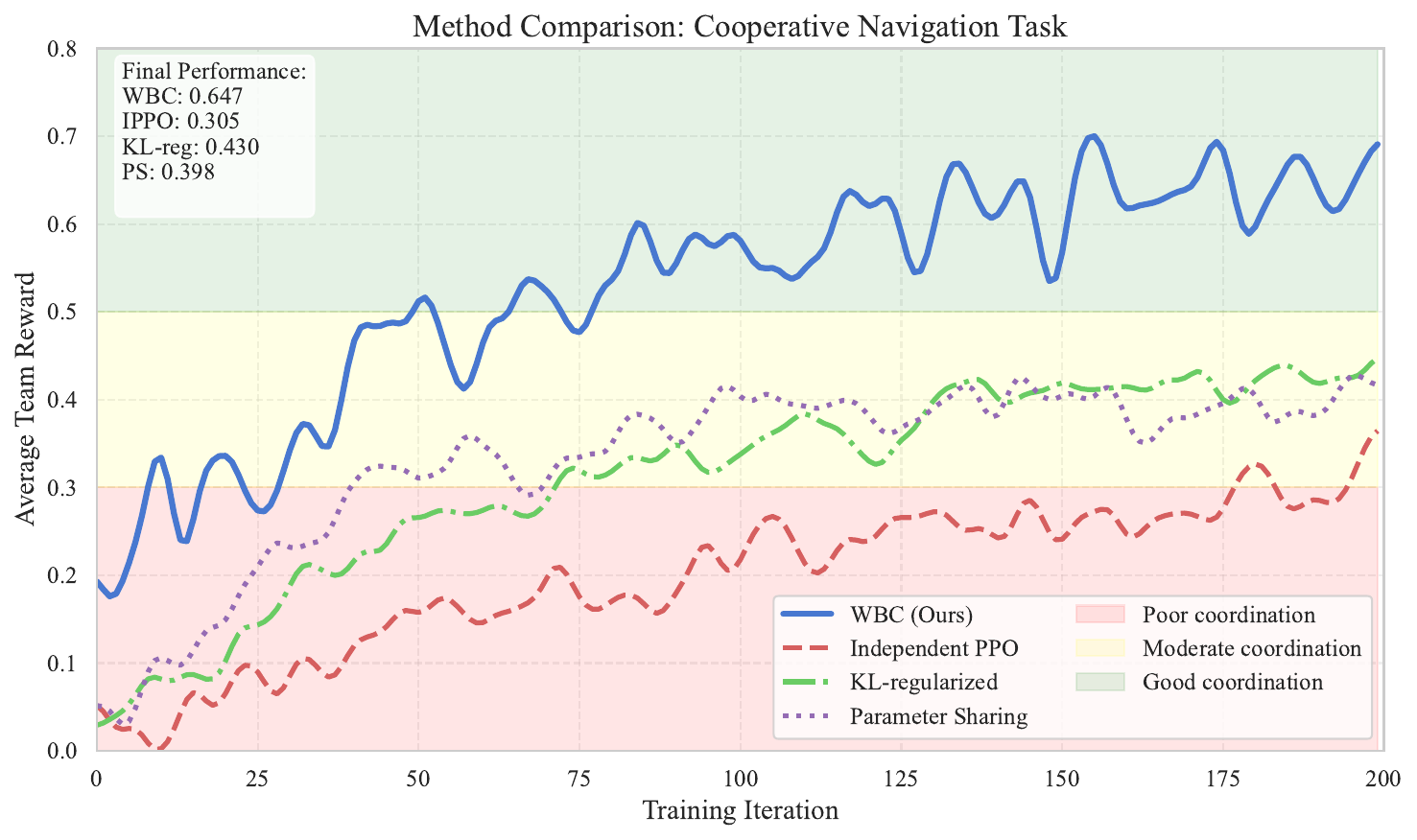}
    \caption{Average team reward over training iterations comparing WBC against baseline methods. WBC achieves superior convergence speed and final performance.}
    \label{fig:method_comparison}
\end{figure*}

Figure \ref{fig:policy_action_maps} illustrates the comparison of learned policies between our WBC approach and the IPPO baseline. Each panel depicts a 2D grid where colors represent the most likely action taken at each position in the state space, holding other state variables constant (i.e., fixing target position and other agents' positions). The yellow star indicates the target location. One could see that the WBC policy (left panel) exhibits a highly structured and coherent spatial organization, with clearly delineated regions corresponding to different actions. The action selection follows an intuitive pattern relative to the target location: \say{right} actions (red) dominate the left side of the grid, \say{left} actions (blue) predominate on the right side, and \say{up} actions (green) are prevalent in the lower portion. This spatial organization creates an efficient flow field that guides agents toward the target while maintaining consistent behavior across the team. In contrast, the baseline policy (right panel) displays a fragmented and less coherent structure. Although some broad patterns are discernible, the policy contains numerous small, isolated patches of inconsistent actions that disrupt the overall navigation flow. This spatial incoherence is indicative of the lack of coordination between agents learning independently, resulting in a strategy that is less effective for team-based navigation.

Figure \ref{fig:policy_maps} presents the action probability distributions for each agent in a set of representative states, visualizing the degree of policy alignment achieved by each method. The baseline agents exhibit high variance in their action preferences, with substantially different probability distributions across identical states. For example, in State 1, Agent 1 assigns highest probability (0.42) to \say{right} actions, Agent 2 strongly prefers \say{stay} (0.63), and Agent 3 favors \say{up} (0.58). This divergence in action preferences illustrates the coordination failure that occurs when agents learn independently without a consensus mechanism. In contrast, WBC-trained agents display remarkably consistent action probability distributions across all three agents. For State 1, all three agents assign their highest probability to the \say{right} action (0.51, 0.56, and 0.60, respectively), while for State 2, they all prioritize the \say{left} action. This alignment confirms the theoretical guarantee of policy convergence provided by our consensus mechanism. Figure \ref{fig:method_comparison} demonstrates that WBC achieves superior final performance compared to baseline methods, with approximately 2× improvement over IPPO and 50\% improvement over KL-regularized approaches.\footnote{KL-regularized approach uses KL divergence to encourage policy similarity. In parameter sharing (PS) approach, all agents share the same policy network.}

\textbf{Limitations.} Despite the theoretical appeal and encouraging preliminary results, the proposed approach is subject to several limitations. The fast-rate guarantee hinges on strong-convexity and density-boundedness assumptions that seldom hold in high-dimensional continuous control tasks. The entropic Sinkhorn step scales quadratically with support size, making the algorithm memory- and time-intensive when agents collect large, uncompressed replay buffers. Performance is currently demonstrated on a single small-scale grid world, so generalizability to standard multi-agent benchmarks (e.g., SMAC, MPE) remains unverified; and finally, the proposed method introduces additional hyper-parameters (adaptive $\varepsilon$ schedule, consensus weight $\lambda$, metric temperature $\beta$) whose sensitivity has not yet been systematically analyzed.

\section{Conclusion}

In this work, we proposed an approach to cooperative MARL by leveraging the geometry of optimal transport to enforce a soft consensus among agent policies. Modeling the team’s collective strategy as the entropic‑regularized Wasserstein barycenter provides a mathematically grounded alternative to conventional coordination techniques. Through a proof‑of‑concept experiment, we have shown that OT‑barycenter consensus accelerates learning and achieves superior coverage in a multi‑agent navigation task compared to KL‑based and parameter‑sharing baselines. Future directions include hierarchical barycenters for subteam formation, adaptive regularization schedules to foster specialization post‑consensus, and application to high‑dimensional continuous‑control benchmarks. We anticipate that geometry‑aware consensus will open new avenues for robust, efficient coordination in complex multi‑agent systems.

\bibliography{example_paper}
\bibliographystyle{icml2025}

\newpage
\appendix
\onecolumn

\section{Deferred Proofs}\label{app:deferred-proofs}

\subsection{Proof of Theorem~\ref{thm:fast_rate_barycentre}}
\label{app:proof-fast-rate}

\begin{proof}
Throughout the proof we abbreviate\;
\(
F(\mu)\!:=\!\tfrac1N\sum_{i=1}^{N}W_{p,\varepsilon_m}(\mu,\mu_i)
\)
\;and\;
\(
\hat F(\mu)\!:=\!\tfrac1N\sum_{i=1}^{N}W_{p,\varepsilon_m}(\mu,\hat\mu_i)
\),
and recall that $\varepsilon_m=D^{p}/m$.

\paragraph{Step 1.  Bounded differences.}
Fix any $\mu\!\in\!\mathcal P(\mathcal X)$ and replace a single sample
$X_{i,j}$ in the data set by an independent copy
$X^{\prime}_{i,j}\sim\mu_i$.
Writing $\hat\mu_i^{(i,j)}$ for the perturbed empirical measure,
\[
\bigl|\hat F(\mu)-\hat F^{(i,j)}(\mu)\bigr|
   \;=\;\frac1N\,
       \bigl|W_{p,\varepsilon_m}(\mu,\hat\mu_i)
           -W_{p,\varepsilon_m}(\mu,\hat\mu_i^{(i,j)})\bigr|.
\]
Because only a mass $1/m$ is moved and the ground cost
is bounded by $D^{p}$, the entropic cost varies by at most
$D^{p}/m$ (see, e.g., Lemma 2.1 in \cite{genevay2018learning}).
Hence
\[
\bigl|\hat F(\mu)-\hat F^{(i,j)}(\mu)\bigr|
   \le \Delta\;:=\;\frac{D^{p}}{N m}.
\tag{A.1}
\]
With $S=N\,m$ samples in total, McDiarmid’s inequality yields
for every $t>0$
\begin{equation}
\Pr\!\bigl[
   |\hat F(\mu)-F(\mu)| \ge t
\bigr]
   \;\le\;
2\exp\!\Bigl(
       -\frac{2t^{2}}{S\Delta^{2}}
     \Bigr)
   \;=\;
2\exp\!\Bigl(
       -\frac{2N m\,t^{2}}{D^{2p}}
     \Bigr).
\label{mcdiarmid-fixed}
\end{equation}

\paragraph{Step 2.  Uniform concentration via an $\varepsilon$–net.}
Let $\gamma>0$ be chosen later and pick an
$\gamma$–net $\{\mu^{(k)}\}_{k=1}^{M_{\gamma}}$
of $(\mathcal P(\mathcal X),W_{p})$.
Standard metric–entropy bounds for probability measures on
a compact $d$–dimensional space give
$\log M_{\gamma}\le C_d\,\gamma^{-d}$ for a constant $C_d$
[Th.~6.18]{\cite{villani2009optimal}.
Applying \eqref{mcdiarmid-fixed} to each $\mu^{(k)}$
and taking a union bound gives, with probability at least
$1-\delta/2$,
\begin{equation}
\max_{k\le M_{\gamma}}
|\hat F(\mu^{(k)})-F(\mu^{(k)})|
   \;\le\;
   \frac{D^{p}}{\sqrt{2N m}}
   \sqrt{\,C_d\,\gamma^{-d}+\log(4/\delta)}.
\label{eq:union-bound-intermediate} \tag{A.2}
\end{equation}
Next, for arbitrary $\mu\in\mathcal P(\mathcal X)$
choose $k$ with $W_{p}(\mu,\mu^{(k)})\le\gamma$.
Using the Lipschitz property
$|W_{p,\varepsilon}( \mu , \nu )
       -W_{p,\varepsilon}( \mu^{\prime}, \nu )|
       \le D^{p} W_{p}(\mu,\mu^{\prime})$ (Lemma 2.1 of
\cite{genevay2018learning}) and averaging over $i$ yields
\(
|F(\mu)-F(\mu^{(k)})|
,\,
|\hat F(\mu)-\hat F(\mu^{(k)})|
\le D^{p}\gamma.
\)
Combining this with \eqref{eq:union-bound-intermediate} shows that the
event
\[
\mathcal E
   :=\Bigl\{
         \sup_{\mu\in\mathcal P(\mathcal X)}
         |\hat F(\mu)-F(\mu)|
         \le
         2D^{p}\gamma
         +
         \frac{D^{p}}{\sqrt{2N m}}
               \sqrt{C_d\,\gamma^{-d}\!+\!\log(4/\delta)}
       \Bigr\}
\]
has probability at least $1-\delta$.
Set
\(
\gamma = m^{-1}
\)
to balance the two terms; since
$\gamma^{-d}\!=\!m^{d}$ and $m\ge2$ we obtain
\begin{equation}
\sup_{\mu}
|\hat F(\mu)-F(\mu)|
   \;\le\;
   \frac{C_1\,\log(4N/\delta)}{m}
   \qquad\text{on }\mathcal E,
\label{eq:uniform} \tag{A.3}
\end{equation}
for a constant $C_1$ depending on $d$ and $D^{p}$.

\paragraph{Step 3.  Stability of the barycentre map.}
Because $F$ is $\lambda_{\min}$–strongly convex,
\begin{equation}
F(\mu)\;\ge\;
F(\mu^{\star})
+\lambda_{\min}\,
   W_{p,\varepsilon_m}(\mu,\mu^{\star})
\quad
\forall\mu\in\mathcal P(\mathcal X).
\label{eq:linear-strong} \tag{A.4}
\end{equation}
Let $\Delta:=\sup_{\mu}|\hat F(\mu)-F(\mu)|$.
By optimality of $\hat\mu^{\star}$ for $\hat F$,
\(
\hat F(\hat\mu^{\star})\le\hat F(\mu^{\star})
\),
so
\(
F(\hat\mu^{\star})
       \le
       \hat F(\hat\mu^{\star})+\Delta
       \le
       \hat F(\mu^{\star})+\Delta
       \le
       F(\mu^{\star})+2\Delta.
\)
Plugging this into \eqref{eq:linear-strong} gives
\[
\lambda_{\min}\,
W_{p,\varepsilon_m}(\hat\mu^{\star},\mu^{\star})
    \;\le\;
2\Delta
\quad\Longrightarrow\quad
W_{p,\varepsilon_m}(\hat\mu^{\star},\mu^{\star})
    \;\le\;
\frac{2\Delta}{\lambda_{\min}}.
\tag{A.5}
\]
Inserting the bound
$\Delta\le C_1\,\log(4N/\delta)/m$ from \eqref{eq:uniform}
completes the proof with
$C=2C_1$.}
\end{proof}

\end{document}